\DeclareSymbolFontAlphabet{\mathbb}{AMSb}
\DeclareSymbolFontAlphabet{\mathbbl}{bbold}
\newtheorem{theorem}{Theorem}
\newtheorem{lemma}{Lemma}
\newtheorem{definition}{Definition}
\newtheorem{conjecture}{Conjecture}
\newtheorem{remark}{Remark}
\def\autorefapp#1{\hyperref[#1]{Appendix~\ref{#1}}}
\begin{document}

\title{On the moments of random quantum circuits\\ and robust quantum complexity}

\author{Jonas Haferkamp}
\affiliation{School of Engineering and Applied Sciences,
	Harvard University, Cambridge, MA 02318, USA}

\begin{abstract}\noindent
	We prove new lower bounds on the growth of robust quantum circuit complexity -- the minimal number of gates $C_{\delta}(U)$ to approximate a unitary $U$ up to an error of $\delta$ in operator norm distance.
 More precisely we show two bounds for random quantum circuits with local gates drawn from a subgroup of $SU(4)$.
	First, for $\delta=\Theta(2^{-n})$, we prove a linear growth rate: $C_{\delta}\geq d/\mathrm{poly}(n)$ for random quantum circuits on $n$ qubits with $d\leq 2^{n/2}$ gates.
	Second, for $ \delta=\Omega(1)$, we prove a square-root growth of complexity: $C_{\delta}\geq \sqrt{d}/\mathrm{poly}(n)$ for all $d\leq 2^{n/2}$.
    Finally, we provide a simple conjecture regarding the Fourier support of randomly drawn Boolean functions that would imply linear growth for constant $\delta$.
	While these results follow from bounds on the moments of random quantum circuits, we do not make use of existing results on the generation of unitary $t$-designs.
	Instead, we bound the moments of an auxiliary random walk on the diagonal unitaries acting on phase states.
 In particular, our proof is comparably short and self-contained.
\end{abstract}

\maketitle

Applications of quantum circuit complexity range from topological phases of matter to the theory of black holes.
In the context of the AdS/CFT correspondence, Brown and Susskind conjectured that complexity grows linearly for an exponentially long time in the dynamic of generic quantum systems~\cite{brown2018second}.
While proving lower bounds on circuit complexity is notoriously hopeless for individual states and unitaries, concrete progress can be made for random ensembles.
In particular, random quantum circuits provide a powerful model for the dynamic of disordered systems.

A key feature of random quantum circuits is the quick generation of unitary designs as proven in the seminal paper Ref.~\cite{brandao_local_2016}.
The result in Ref.~\cite{brandao_local_2016} can be used to imply a lower bound on the quantum circuit complexity of random quantum circuits of depth $d$ that scales like $\Omega(d^{1/10.5})$~\cite{brandao_complexity_2019,roberts2017chaos}.
Subsequently, this bound was improved to $\Omega(d^{1/(5+o(1))})$ in Ref.~\cite{haferkamp2022random}.
In the limit of large local dimensions, a linear growth rate was proven in Ref.~\cite{hunter2019unitary}.
Similarly, a (near) linear growth of complexity can be proven as long as the local dimension $q$ satisfies $q\geq 6t^2$ in Ref.~\cite{haferkamp2020improved}.
Ref.~\cite{jian2022linear} provides ideas based on path integrals towards proving linear growth in Brownian dynamics.
Lastly, the late time regime of random quantum circuits was studied in Ref.~\cite{oszmaniec2022saturation}, proving that saturation and recurrences appear after an exponential and doubly exponential time respectively.

For random quantum circuits over qubits, linear growth of complexity was established in Ref.~\cite{haferkamp2021linear} (see also~\cite{li2022short}).
However, the methods of Ref.~\cite{haferkamp2021linear} apply to a particularly brittle notion of circuit complexity, which does not account for a quantifiable implementation error $\delta>0$.
In particular, the gates involved in Refs.~\cite{haferkamp2021linear,li2022short} cannot necessarily be represented with a bounded number of bits.

In this work, we prove two lower bounds on the circuit complexity for random quantum circuits with gates drawn from a specific subgroup of $SU(4)$.
First, we prove a linear lower bound $\Omega(d/\mathrm{poly}(n))$ for all $d\leq 2^{n/2}$ on the minimal number of gates required to approximate a unitary up to errors $\delta=\Theta(2^{-n})$.
In particular, this implementation error allows for a representation with linearly many bits.
Therefore, our result cannot be explained as an artefact of working with a continuous parameter space as opposed to the more combinatorial notion of complexity for Boolean functions.
For constant implementation errors we show a growth of $\Omega(\sqrt{d}/\mathrm{poly}(n))$ for the circuit complexity.
Moreover, we formulate Conjecture~\ref{conjecture:boolean} on the Fourier support of randomly drawn boolean functions, which would directly imply a version of the Brown-Susskind conjecture for random quantum circuits with robustness $\delta=\Omega(1)$.
While we do not prove generation of unitary $t$-designs, both of our results follow from moment bounds.
This work is comparably self-contained and only uses well-known bounds on the mixing of random walks of finite groups~\cite{diaconis1993comparison} and Parseval's identity~\cite{o2014analysis} for the Fourier transform of boolean functions.
In particular, we do not require techniques to prove spectral gaps such as Ref.~\cite{nachtergaele1996spectral} or ~\cite{knabe1988energy} nor the path coupling technique on the unitary group~\cite{oliveira2009convergence}.

The subgroup of $SU(4)$ we draw gates from is generated by CNOTs and diagonal unitaries.
We can prove that these effectively generate a quickly mixing random walk on phase states.
Random phase states are also used for constructions of computational quantum pseudorandomness, i.e. ensembles of states that are indistinguishable from the Haar measure with efficient algorithms~\cite{ji2018pseudorandom,bouland2022quantum}.

\section{Preliminaries}
For any compact Lie group $G$, we denote by $\mu_G$ the uniform (left-Haar) measure on $G$.
For a probability measure $\nu$ on $SU(2^k)$ with $k\leq n$, denote by $\nu_{i_1,\ldots,i_k}$ the embedding in $SU(2^n)$ on the tensor product of the qubits $i_1,\ldots,i_k$.
The trace norm (or Schatten $1$-norm) of a matrix $A$ is denoted by $||A||_1:=\mathrm{Tr}[|A|]$, where $|A|=\sqrt{AA^{\dagger}}$ and its operator norm by $||A||_{\infty}$.

We define the following standard model of random quantum circuits:
\begin{definition}[Local random quantum circuits]\label{definition:randomquantumcircuits}
Random quantum circuits are defined as a random walk on the unitary group $SU(2^n)$: Draw a Haar random unitary from $SU(4)$ and apply them to a randomly drawn pair $(i,i+1)$ of nearest neighbor qubits.
Here, we impose periodic boundary conditions, i.e. we identify qubit $n+1$ and $1$.
We also define $\nu$-random quantum circuits, where we draw the random gates from some probability measure $\nu$ on $SU(4)$.
We denote the corresponding probability measure on $SU(2^n)$ by $\nu_{(n)}$.
\end{definition}
\begin{definition}
We define a probability measure $\zeta$ on $SU(4)$:
With probability $1/2$: Draw a gate from the group generated by $\mathrm{CNOT}$s on $2$ qubits.
This is a finite group of $6$ elements that we denote by $\mathcal{Z}_2$.
Also with probability $1/2$: Draw a gate $e^{\mathrm{i}\phi Z}\otimes \mathbbm{1}_2$ with $\phi$ uniformly distributed over $[0,2\pi]$.
\end{definition}
\begin{remark}
    We remark that it is not essential how the gates are drawn from the subgroup $\langle \mathrm{CNOT}_{1,2}, \mathrm{CNOT}_{2,1}\rangle\times \mathrm{Diag}_4\subset SU(4)$, where $\times$ denotes the element-wise multiplication and $\mathrm{Diag}_4$ is the subgroup of diagonal unitaries.
    In particular, drawing uniformly from this group would also yield the desired results.
\end{remark}

We use of the concept of moment operators of probability measures on the unitary group.
These encode how the $t$-moments of the measure behave.
More precisely, the matrix entries of
\begin{equation}
M(\nu,t):=\mathbb{E}_{U\sim \nu} U^{\otimes t}\otimes \overline{U}^{\otimes t}
\end{equation} 
are the expectation values of balanced monomials of degree $t$.
Here, $\overline{()}$ denotes the entry-wise complex conjugate.
We will repeatedly use the notation $U^{\otimes t,t}:= U^{\otimes t}\otimes \overline{U}^{\otimes t}$ and $|\psi\rangle^{\otimes t,t}:=|\psi\rangle^{\otimes t}\otimes \overline{|\psi\rangle}^{\otimes t}$.
These operators are hermitian when $\nu$ is a symmetric measure, i.e. all expectation values are invariant under $()^{\dagger}$.
The highest eigenvalue of $M(\nu,t)$ is always $1$ and we will refer to the gap between the $1$ and the second highest eigenvalue as the spectral gap of $M(\nu,t)$.
\section{Results}
Here, we present the two main results regarding the growth of complexity in random quantum circuits.
We define robust quantum circuit complexity for states and unitaries:
\begin{definition}[Quantum circuit complexity]
	Let $|\psi\rangle$ be a state on $n$ qubits and $\mathcal{G}\subseteq SU(4)$.
	Then, for $0\leq \delta<1$, we defined $C_{\mathcal{G},\delta}(|\psi\rangle)$ to be the minimal number $m$ of gates $V_1,\ldots,V_m$ such that $|\phi\rangle=V_m\cdots V_1|0^n\rangle$ satisfies 
	\begin{equation}
\frac12	|| |\psi\rangle\langle\psi|-|\phi\rangle\langle \phi| ||_1\leq \delta.
	\end{equation}
	For a unitary $U$, we define its quantum circuit complexity $C_{\mathcal{G},\delta}(U)$ as the minimal number $m$ of gates $V_1,\ldots,V_m$ such that 
	\begin{equation}
	||U-V_1\cdots V_m||_{\infty}\leq \delta.
	\end{equation}
		We further set $C_{\delta}:=C_{SU(4),\delta}$.
\end{definition}
Our main result is the following theorem:
\begin{theorem}[Growth of complexity in random quantum circuits]\label{thm:square-root-growth}
	Consider a unitary $U$ generated by $\zeta$-random quantum circuits with $d$ $d\leq 2^{n/2}$ gates.
	Then, for any $\delta\geq 0$ with $\delta<\sqrt{1-2^{-1/2}}$, we have with probability $1-\exp(-\Omega(d/n^8))$ that 
	\begin{equation}\label{eq:linearboundintheorem}
	C_{\tilde{\delta}}(U)\geq \frac{d}{K n^{9}\log(n)\log_2(1/\delta)}\left(1-2\log_2\left(\frac{1}{1-\delta^2}\right)\right),
	\end{equation}
	where 
	\begin{equation}
	\tilde{\delta}:=\frac{(1-\sqrt{1-\delta^2})d}{40000n^8 2^n}
	\end{equation}
	and $K>0$ a constant.
	Moreover, with probability $1-\exp(-\Omega(\sqrt{d}/n^{4}))$ and
for $\delta\in (0,1/\sqrt{2})$ it holds that
%	\begin{equation}\label{eq:complexitybound-K}
%	C_{\mathcal{K},\delta}(U|0^n\rangle)\geq \frac{\sqrt{d}}{720 n^{3.5}\log_2(|\mathcal{K}|)} \left(n-\log_2\left(\frac{\sqrt{d}}{180n^{3.5}}\right)-2\log_2\left(\frac{1}{1-\delta^2}\right)\right).
%	\end{equation}
%	Moreover, we find for $\delta\in (0,1/2)$ that 
		\begin{equation}\label{eq:complexitybound}
	C_{\delta}(U|0^n\rangle)\geq \frac{\sqrt{d}}{ Kn^{4}\log_2(d/\delta^2)} \left(n-\log_2\left(\frac{\sqrt{d}}{200n^{4}}\right)-2\log_2\left(\frac{1}{1-2\delta^2}\right)\right).
	\end{equation}
\end{theorem}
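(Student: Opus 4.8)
\emph{Proof strategy.}
The plan is to prove both estimates by the same scheme: a covering argument bounding the number of low-complexity objects, combined with a high-moment concentration bound showing that a random circuit is unlikely to land close to any fixed one. I begin with the state bound~\eqref{eq:complexitybound}. The set of states of complexity at most $m$ (gate set $SU(4)$, tolerance $\delta$) lies in a union of $N\le\exp\!\big(O(m\log(mn/\delta))\big)$ trace-norm balls of radius $O(\delta)$: each of the $m$ gates has $\binom{n}{2}$ possible positions and may be replaced by an element of an $\epsilon$-net of $SU(4)$ of size $(C/\epsilon)^{15}$, with $\epsilon=\Theta(\delta/m)$ so that the total error stays $O(\delta)$. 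Hence it is enough to show that for every fixed state $|\phi\rangle$,
\begin{equation}
\Pr_{V\sim\nu_n^{*d}}\!\Big[\,|\langle\phi|V|0^n\rangle|^{2}\ge 1-\delta^{2}\,\Big]\;<\;N^{-1}\exp\!\big(-\Omega(\sqrt d/n^{4})\big),
\end{equation}
and then union bound. By Markov applied to the $2t$-th moment,
\begin{equation}
\Pr_{V\sim\nu_n^{*d}}\!\Big[\,|\langle\phi|V|0^n\rangle|^{2}\ge 1-\delta^{2}\,\Big]\;\le\;\frac{\mathbb{E}_{V\sim\nu_n^{*d}}\big[|\langle\phi|V|0^n\rangle|^{2t}\big]}{(1-\delta^{2})^{t}},
\end{equation}
so everything reduces to an upper bound on the moments $\mathbb{E}_{V\sim\nu_n^{*d}}\big[|\langle\phi|V|0^n\rangle|^{2t}\big]$ for $t$ as large as $\approx 2^{n/2}$. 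The operator-norm bound~\eqref{eq:linearboundintheorem} is handled the same way after noting that $\|U-W\|_\infty\le 2\tilde\delta$ forces $\big|\Tr(W^{\dagger}U)/2^{n}\big|^{2}\ge 1-O(\tilde\delta)$, i.e.\ $\big|\langle\Omega_W|(U\otimes\iden)|\Omega\rangle\big|^{2}\ge 1-O(\tilde\delta)$ for the maximally entangled state $|\Omega\rangle$ on $2n$ qubits and the fixed state $|\Omega_W\rangle=(W\otimes\iden)|\Omega\rangle$; this is again a fidelity between a fixed vector and a random circuit applied to one half of $|\Omega\rangle$, so the same moment input — now for vectors built from $|\Omega\rangle$, whose Haar value is governed by the classical identity $\mathbb{E}_{U\sim\mu_{SU(2^n)}}|\Tr U|^{2t}=t!$ — together with a net over the $\le 15m$-dimensional manifold of complexity-$m$ unitaries completes the argument; the much smaller tolerance $\tilde\delta$ is exactly what upgrades the bound from $\sqrt d$ to linear in $d$.

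The moment estimate is the technical core. Writing $\mathbb{E}_{V\sim\nu_n^{*d}}\big[|\langle\phi|V|0^n\rangle|^{2t}\big]=\langle\phi|^{\otimes t,t}\,M(\nu_n,t)^{d}\,|0^n\rangle^{\otimes t,t}$ and expanding $|0^n\rangle^{\otimes t,t}$ in the eigenbasis of the Hermitian operator $M(\nu_n,t)$, the eigenvalue-$1$ component contributes precisely the Haar moment $\mathbb{E}_{U\sim\mu_{SU(2^n)}}\big[|\langle\phi|U|0^n\rangle|^{2t}\big]=\binom{2^n+t-1}{t}^{-1}\le t!\,2^{-nt}$. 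For the remaining, non-trivial part I do \emph{not} appeal to a spectral gap of $M(\nu_n,t)$ (that would be an approximate $t$-design statement); instead I use that the only matrix elements needed involve the action of $V^{\otimes t,t}$ on vectors assembled from $|0^n\rangle$ and $|\phi\rangle$ (resp.\ from $|\Omega\rangle$), and that this action is, after conjugating by a fixed single-qubit basis change, dominated by a random walk on the \emph{diagonal} unitaries: a layer of random diagonal two-qubit gates sends a phase state $\tfrac1{\sqrt{2^n}}\sum_x e^{i\theta_x}|x\rangle$ to another phase state, so the $t$-copy auxiliary walk is a random walk on a finite abelian group with explicit characters as eigenvalues. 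Comparing Dirichlet forms via Diaconis--Saloff-Coste~\cite{diaconis1993comparison} transfers a bound on the auxiliary eigenvalues to $M(\nu_n,t)$ with only $\mathrm{poly}(n)$ loss, while Parseval's identity~\cite{o2014analysis} controls the overlaps of the random phase states that appear — the Fourier weight of a generic parity being spread over all $2^n$ coefficients, so a typical overlap is $O(2^{-n})$. One must keep track of a low-dimensional set of ``bad'' characters (those supported on small Fourier sets) whose eigenvalue is not bounded away from $1$; the relevant vectors have only a tiny overlap with these, and bounding \emph{most} non-trivial eigenvalues turns out to suffice. The outcome is
\begin{equation}
\mathbb{E}_{V\sim\nu_n^{*d}}\big[|\langle\phi|V|0^n\rangle|^{2t}\big]\;\le\;\frac{t!}{2^{nt}}\;+\;\mathrm{poly}(n,t)\Big(1-\tfrac{1}{\mathrm{poly}(n)}\Big)^{d},
\end{equation}
valid as long as $t\lesssim 2^{n/2}$; beyond that, birthday-type collisions among the $t$ copies destroy the phase-state estimate, which is where the hypothesis $d\le 2^{n/2}$ enters.

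It remains to optimise $t,m,\epsilon$. Feeding the moment bound into Markov and the union bound gives a requirement of the form
\begin{equation}
\exp\!\big(O(m\log(mn/\delta))\big)\cdot\Big[\,t!\,\big(2^{n}(1-\delta^{2})\big)^{-t}+\mathrm{poly}(n,t)\big(1-\tfrac1{\mathrm{poly}(n)}\big)^{d}\,\Big]\;<\;\exp\!\big(-\Omega(\sqrt d/n^{4})\big).
\end{equation}
Choosing $t=\Theta(\sqrt d/\mathrm{poly}(n))$ — the largest value compatible with $t\lesssim 2^{n/2}$ and $d\le 2^{n/2}$ — and solving for the largest admissible $m$ yields
\begin{equation}
m\;\gtrsim\;\frac{\sqrt d\,\big(n-\log_2(\sqrt d/\mathrm{poly}(n))-O(\log\tfrac1{1-2\delta^{2}})\big)}{\mathrm{poly}(n)\,\log_2(d/\delta^{2})},
\end{equation}
i.e.~\eqref{eq:complexitybound}. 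For~\eqref{eq:linearboundintheorem} the tolerance $\tilde\delta=\Theta\big(d/(n^{8}2^{n})\big)$ makes $\log(1/\tilde\delta)=\Theta(n)$ while leaving the decay term $\big(1-1/\mathrm{poly}(n)\big)^{d}$ in charge, so $t$ may be pushed up to $\Theta(d/\mathrm{poly}(n))$ and the same bookkeeping produces a bound linear in $d$. The decisive obstacle throughout is the moment estimate: establishing that the diagonal/phase-state auxiliary walk genuinely dominates the relevant matrix elements of $M(\nu_n,t)^d$, making the Diaconis--Saloff-Coste comparison quantitative with only polynomial losses, and pinning down the exact threshold near $t\approx 2^{n/2}$ at which the Parseval-based control of random phase-state overlaps degrades.
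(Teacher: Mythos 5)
Your outline for the state bound \eqref{eq:complexitybound} (an $\varepsilon$-net over low-complexity states, Markov on the $2t$-th moment, and moments controlled through an auxiliary walk of CNOTs and $Z$-rotations analysed with Parseval) matches the paper's scheme in spirit, but the moment bound you feed into it is not one you have established, and it is in fact the crux. You claim $\mathbb{E}\,|\langle\phi|V|0^n\rangle|^{2t}\le t!\,2^{-nt}+\mathrm{poly}(n,t)\,(1-1/\mathrm{poly}(n))^{d}$ for all $t\lesssim 2^{n/2}$. Parseval only guarantees that a uniformly random $Z$-string distinguishes a non-equivalent pair of $t$-tuples with probability at least $1/2t$, and this is essentially tight (take the two tuples to enumerate two parallel affine subspaces: the Fourier support of $f_{x_1,\ldots,x_t}^{x'_1,\ldots,x'_t}$ then has size only about $2^{n}/t$). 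So the decay obtainable this way is $(1-\tfrac{1}{2t})^{d/\mathrm{poly}(n)}$, which forces $d\gtrsim t^{2}\mathrm{poly}(n)$, i.e.\ $t=\Theta(\sqrt{d}/\mathrm{poly}(n))$ and only the $\sqrt{d}$ bound. A $(1-1/\mathrm{poly}(n))^{d}$ decay holds only for tuples whose function has Fourier support $\ge 2^{n}/\mathrm{poly}(n)$, and the counting bound on the exceptional tuples, $t!\,2^{(n+1)A}2^{-nt}$, is only useful when $t=\Theta(2^{n})$ — not at $t\approx 2^{n/2}$ or $t\approx d$. The statement you assert is essentially the content of the paper's Conjecture~\ref{conjecture:boolean}, i.e.\ an open problem, not a lemma. (Also, the Diaconis--Saloff-Coste comparison is used only to mix the CNOT walk on the finite group $\mathcal{Z}$; the transfer from the auxiliary walk to $M(\nu_n,t)$ is the positive-semidefinite domination $M(\theta,t)\ge M(\nu_n,t)$, available because $\zeta$ is a mixture of Haar measures on subgroups, together with a Cauchy--Schwarz step to replace $|0^n\rangle,|\psi\rangle$ by $|+^n\rangle$.)

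For the unitary bound \eqref{eq:linearboundintheorem} your route breaks down. Reducing $\|U-W\|_\infty\le\tilde\delta$ to a Choi-state fidelity is fine in principle, but Theorem~\ref{prop:momentbound} gives moments of $|\langle\psi|U|0^n\rangle|$, not of $|\tr(W^\dagger U)|/2^n$, and — more importantly — the exponentially small tolerance $\tilde\delta$ does not let you ``push $t$ up to $\Theta(d/\mathrm{poly}(n))$'': the obstruction to large $t$ was never the tolerance or the net size (which only grows as $\tilde\delta$ shrinks), it is the $t$-dependence of the moment decay discussed above. The paper's actual argument is different: first, at depth $d'=\mathrm{poly}(n)2^{n}$ one can take $t=2^{n}/2$, where the counting argument does control the exceptional eigenvalues, yielding maximal complexity $\Omega(2^{n})$ with probability $1-2^{-\Omega(2^{n})}$; then one splits this exponentially deep circuit into $\sim \mathrm{poly}(n)2^{n}/d$ i.i.d.\ blocks of $d$ gates and derives a contradiction if depth-$d$ circuits had complexity $o(d/\mathrm{poly}(n))$ with probability $\omega(e^{-d/\mathrm{poly}(n)})$. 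The exponentially small $\tilde\delta=\Theta(d\,2^{-n}/\mathrm{poly}(n))$ is the \emph{cost} of error accumulation across those blocks, not a lever that improves the moment/net trade-off; likewise, the hypothesis $d\le 2^{n/2}$ enters in this block bookkeeping (to absorb the $d^{2}/(\mathrm{poly}(n)2^{n})$ term), not through ``birthday collisions'' in the moment estimate. Without this block/contradiction step, your proposal does not yield a bound linear in $d$.
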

We remark that the constant $K$ can be explicitly upper bounded using e.g. techniques from Ref.~\cite{harrow2002efficient}.
The scaling in Eq.~\eqref{eq:linearboundintheorem} will follow from a moment bound at late times $d'=\mathrm{poly}(n)2^n$.
At this depth, the moment bound in Theorem~\ref{prop:momentbound} implies that the robust circuit complexity is maximal $\Omega(2^n)$ with probability $1-O(2^{-2^{n}})$.
We will then split the deep random quantum circuit into blocks with $d$ gates that are identically distributed and observe that a complexity of $o(d/\mathrm{poly}(n))$ with probability $\omega(2^{-d/\mathrm{poly}(n)}))$ leads to a contradiction:
All these instances add up to circuits with complexity $o(2^n)$ with probability at least $\omega(2^{-2^n})$.
Unfortunately, by adding the circuits we loose the robustness of the statement as errors can accumulate. 
The details of this argument can be found in Section~\ref{section:growthofcomplexity}.

\section{Moment bounds}
In this section we prove the following bounds on the moments of random quantum circuits. 
We will then apply these results in the subsequent section to obtain the main results presented in the previous section.
\begin{theorem}\label{prop:momentbound}
	We have the following bound on the moments of $\zeta$-random quantum circuits of depth $d$ and any state $|\psi\rangle$
	\begin{equation}\label{eq:momentboundintheorem}
	\mathbb{E}_{\zeta_{(n)}^{*d}}|\langle \psi|U|+^n\rangle|^{2t}\leq \sqrt{\frac{t!}{2^{nt}}+ e^{-d/40000n^7}+\left(1-\frac{1}{2t}+3\times 2^{-n}\right)^{d/16000n^7}}.
	\end{equation}
	Moreover,
		 \begin{equation}\label{eq:finalmomentboundintheorem}
	\mathbb{E}_{U\sim \zeta_{(n)}^{*d}} |\langle\psi|U|+^n\rangle|^{2t}\leq \sqrt{t!2^{2^{n}/2}2^{-nt}+ e^{- d/40000n^7}+\left(1-\frac{1}{2(n+1)}+3\times 2^{-n}\right)^{d/16000n^7}}.
	\end{equation}
\end{theorem}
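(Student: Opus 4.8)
The plan is to reduce the moments of genuine random quantum circuits to the moments of an auxiliary random walk on the diagonal unitaries acting on phase states, and then to control the latter by the comparison technique of Diaconis--Saloff-Coste~\cite{diaconis1993comparison} together with Parseval's identity. The starting point is the observation that $\mathbb{E}_{\nu_n^{*d}}|\langle\psi|U|0^n\rangle|^{2t}$ equals $\langle \psi|^{\otimes t,t} M(\nu_n,t)^{d} |0^n\rangle^{\otimes t,t}$, so everything is governed by the action of $M(\nu_n,t)$ on the relevant subspace. Rather than invoking a spectral gap for $M(\nu_n,t)$ directly, I would sandwich the dynamics between a ``burn-in'' phase that brings the state close to the subspace spanned by phase states $|\theta\rangle = 2^{-n/2}\sum_x (-1)^{\theta(x)}|x\rangle$ (or their complex-phase analogues), and then track a \emph{classical} random walk on that subspace. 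Concretely, after conjugating by appropriate single-qubit layers, the restriction of one step of the circuit to the phase-state sector behaves like a random walk where each two-qudit gate randomizes a local chunk of the Boolean/phase function; the first term $\sqrt{t!/2^{nt}}$ is the Haar value $\mathbb{E}_{\mathrm{Haar}}|\langle\psi|U|0^n\rangle|^{2t}\le \sqrt{t!\,2^{-nt}}$ (by Gaussianity of Haar-random amplitudes), the $e^{-d/n}$ and $e^{-d/40000 n^7}$ terms are the costs of the burn-in / of reaching the phase-state sector, and the remaining $(1-\tfrac{1}{2t}+3\cdot2^{-n})^{d/16000n^7}$ factor is the contraction of the auxiliary walk.

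The key steps, in order, are: (i) write the $2t$-th moment as $\langle\psi|^{\otimes t,t} M(\nu_n,t)^d|0^n\rangle^{\otimes t,t}$ and split $d = d_1 + d_2$ with $d_1 = \Theta(n\log n)$ or so; use a standard union-bound / coupon-collector estimate to argue that after $d_1$ steps every pair of qubits has been hit, contributing the $ne^{-d/n}$ error and projecting (up to small error) onto a structured subspace; (ii) on that subspace, identify the induced dynamics with a random walk $P$ on a finite group $\Gamma$ of diagonal unitaries generated by local diagonal gates, acting on $t$-fold phase states indexed by Boolean functions (or by $\mathbb{Z}_q$-valued functions after a suitable discretization with $q\approx$ poly); (iii) use the comparison method~\cite{diaconis1993comparison} to bound the spectral gap of $P$ by that of the complete (all-to-all) walk, whose gap is $\Omega(1/n^{O(1)})$ by a direct computation, yielding the $1/16000n^7$ in the exponent and the $1-\tfrac{1}{2t}$ base; (iv) bound the overlap of the evolved state with $|0^n\rangle^{\otimes t,t}$ via Parseval~\cite{o2014analysis}: the stationary contribution reproduces the Haar term, and the non-stationary part is controlled by $\|P^{d_2} - \Pi\|$ times a norm factor that is at most $\sqrt{2^{2^n/2}}$ in the crude bound of Eq.~\eqref{eq:finalmomentboundintheorem} (from counting phase functions) and is absent in the refined Eq.~\eqref{eq:momentboundintheorem}; (v) combine, take the square root coming from Cauchy--Schwarz between the two halves $d_1,d_2$ (this is the source of the outer $\sqrt{\,\cdot\,}$), and collect error terms. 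For the second bound one replaces the per-step contraction $1-\tfrac{1}{2t}$ by $1-\tfrac{1}{2(n+1)}$, which corresponds to working with the $(n+1)$-dimensional ``degree'' filtration of Fourier levels rather than all $t$ moments — this is what lets the estimate survive for $t$ as large as $\Theta(2^n)$ at the cost of the $2^{2^n/2}$ prefactor.

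The main obstacle I expect is step (iii): getting a \emph{polynomial-in-$n$} lower bound on the spectral gap of the auxiliary walk on diagonal unitaries acting on the $t$-fold phase-state sector, uniformly in $t$ (and, for the second bound, with the $t$-dependence replaced by $n$-dependence via the Fourier-degree structure). The comparison technique requires exhibiting, for every generator of the all-to-all walk, a short path built from the nearest-neighbor generators with bounded congestion; on the unitary group this kind of argument can be delicate because the local gates do not commute and one must track how a local phase-randomization on qubits $\{i,j\}$ can be simulated by a bounded-length product of nearest-neighbor phase-randomizations. A secondary difficulty is the reduction in step (i)/(ii): justifying rigorously that the circuit dynamics, after the burn-in, is \emph{dominated} (in the sense needed for an upper bound on the moment) by the phase-state random walk, including the discretization of continuous phases to a $\mathbb{Z}_q$ group without losing more than the stated errors. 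Everything else — the Haar Gaussian estimate, Parseval, the coupon-collector bound, and the final bookkeeping of constants — is routine.
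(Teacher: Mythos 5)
Your high-level ingredients are correct (an auxiliary walk on diagonal unitaries acting on phase states, Parseval, a coupon-collector term $ne^{-d/n}$, a counting argument supplying the $2^{2^n/2}$ prefactor), but the reduction in your steps (i)--(ii) has a real gap, and the actual mechanism is different from what you sketch. The paper does not ``burn in'' and then project onto a phase-state sector up to small error; instead it proves an exact operator domination. Define a local measure $\zeta$ on $SU(4)$ as an equal mixture of (a) the uniform measure on the six-element group generated by $\mathrm{CNOT}$s and (b) $e^{\mathrm{i}\phi Z}\otimes\mathbbm{1}$ with $\phi$ uniform. Since both components are uniform measures on subgroups, both $t$-fold moment operators are orthogonal projectors, so $M(\zeta,t)\geq M(\mu_{SU(4)},t)$ as positive semidefinite operators, and summing over edges gives $M(\theta,t)\geq M(\nu_n,t)$ and hence $\mathbb{E}_{\nu_n^{*d}}|\langle+^n|U|+^n\rangle|^{2t}\leq\mathbb{E}_{\theta^{*d}}|\langle+^n|U|+^n\rangle|^{2t}$ with no leakage to control. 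What you flag as a ``secondary difficulty'' --- justifying that the dynamics is dominated by the phase-state walk --- is in fact the crux, and your burn-in formulation would not resolve it; the operator inequality above is the missing idea.

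The role of the Diaconis--Saloff-Coste comparison is also misplaced in your plan, and as a result your ``main obstacle (iii)'' does not actually arise. The per-step contraction $1-\tfrac{1}{2t}$ does not come from a spectral-gap comparison on the phase-state walk: after passing to the auxiliary walk, the $t$-fold moment operator on the diagonal sector has eigenvalues $2^{-n}|\{y:\hat f_{x_1,\dots,x_t}^{x'_1,\dots,x'_t}(y)=0\}|$, and Parseval gives directly that $f$ (after removing common strings to length $r\leq t$) has Fourier support of size $\geq 2^n/(2r)\geq 2^n/(2t)$, so a uniformly random $y$ distinguishes the tuples with probability at least $1/(2t)$ --- a one-line computation, no comparison needed. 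The comparison technique (Lemma~\ref{lemma:convergenceCNOTs}) enters only to show that the walk by random $\mathrm{CNOT}$s mixes to the uniform measure on the finite reversible-circuit group $\mathcal{Z}_n$ within $O(n^7)$ steps, which guarantees the conjugated $Z$-strings $Z^{y_j}$ have $y_j$ nearly uniform (within $O(2^{-n})$ in TV); the ``short paths with bounded congestion'' concern is handled by the known $O(n^2)$ $\mathrm{CNOT}$-diameter of $\mathcal{Z}_n$, a much tamer setting than the unitary group. Two smaller corrections: the outer square root comes from a single Cauchy--Schwarz step $\langle\psi|^{\otimes t,t}M(\nu_n^{*d},t)|+^n\rangle^{\otimes t,t}\leq\sqrt{\langle+^n|^{\otimes t,t}M(\nu_n^{*2d},t)|+^n\rangle^{\otimes t,t}}$ to symmetrize the boundary state, not from a $d_1,d_2$ split; and no discretization to $\mathbb{Z}_q$ is required, since $\mathbb{E}_\phi e^{\mathrm{i}\phi k}=0$ for every nonzero integer $k$ already kills the non-constant terms.
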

The first bound yields strong upper bounds for all $t$, but requires quadratic depth in $t$.
The second bound only produces reasonable bounds for $t=\Omega(2^n/\mathrm{poly}(n))$, but for a depth linear in $t$.
We remark here that the first bound Eq.~\eqref{eq:momentboundintheorem} even implies the square-root growth of  stronger notion of quantum circuit complexity that rules out the distinguishability from the maximally mixed stated with small circuits as proven in Ref.~\cite{brandao_complexity_2019,brandao2016efficient}.

In this section we present the proof of Theorem~\ref{prop:momentbound}.

\begin{proof}[Proof of Theorem~\ref{prop:momentbound}]

We first derive a bound on the moments
\begin{equation}
\mathbb{E}_{U\sim \zeta_{(n)}^{*d}}|\langle+^n|U|+^n\rangle|^{2t}=\mathbb{E}_{U_1\sim \zeta_{(n)}}\cdots \mathbb{E}_{U_d\sim \zeta_{(n)}}|\langle+^n|U_1\cdots U_d|+^n\rangle|^{2t}.
\end{equation}
We first sort the wheat (phases) from the chaff ($\mathrm{CNOT}$s): Commute every gate $e^{\mathrm{i}\phi Z}$ through the circuit to the ket $|+^n\rangle$.
Here, it will act non-trivially as $e^{\mathrm{i}\zeta_{(n)j} Z^y}$ with $Z^y:=Z^{y[1]}\otimes \cdots \otimes Z^{y[n]}$ for a random bitstring $y\in \{0,1\}^n$.
Of course, both the number of rotations and the bitstrings $y$ depend on the $\mathrm{CNOT}$s that where drawn.
 At the same time, the $\mathrm{CNOT}$s act on the bra $\langle +^n|$ trivially.
Denote by $p_{y}$ the monomial defined by $p_y(x)=x[1]^{y[1]}\cdots x[n]^{y[n]}$.
We find
\begin{align}
\begin{split}\label{eq:momentcal1}
\mathbb{E}_{U\sim \zeta_{(n)}^{*d}}|\langle+^n|U|+^n\rangle|^{2t}&=\mathbb{E}_{m,y_1,\ldots,y_m, \phi_1,\ldots,\phi_m}|\langle+^n|\prod_j e^{\mathrm{i}\phi_j Z^{y_j}}|+^n\rangle|^{2t}\\
&=\mathbb{E} \frac{1}{2^{2nt}}\left(\sum_{x}e^{\mathrm{i}\sum_{j}\phi_jp_{y_j}(x)}\right)^t\left(\sum_{x'}e^{-\mathrm{i}\sum_{j}\phi_jp_{y_j}(x)}\right)^t\\
&=\frac{1}{2^{2nt}} \mathbb{E}\sum_{x_1,\ldots,x_t,x'_1,\ldots,x'_t}e^{\mathrm{i}\sum_{j}\phi_j\sum_{l=1}^{t}\left(p_{y_j}(x_l)-p_{y_j}(x'_l)\right)}.
\end{split}
\end{align}
We can split this sum in two cases: all pairs of tuples $(x_1,\ldots,x_t),(x'_1,\ldots,x'_t)$ that can be obtained from each other via a permutation of the $t$ bitstrings and the rest.
We denote equivalence modulo permutations by $\sim$ and find:
\begin{align}
\begin{split}\label{eq:momentcal2}
\eqref{eq:momentcal1}&=\frac{1}{2^{2nt}} \mathbb{E}\left(\sum_{(x_1,\ldots,x_t)\sim(x'_1,\ldots,x'_t)}e^{\mathrm{i}\sum_{j}\phi_j\sum_{l=1}^{t}\left(p_{y_j}(x_l)-p_{y_j}(x'_l)\right)}+\sum_{(x_1,\ldots,x_t)\not\sim(x'_1,\ldots,x'_t)}e^{\mathrm{i}\sum_{j}\phi_j\sum_{l=1}^{t}\left(p_{y_j}(x_l)-p_{y_j}(x'_l)\right)}\right)\\
&=  \frac{1}{2^{2nt}}\sum_{x_1,\ldots,x_t} |\{(x'_1,\ldots,x'_t)\sim (x_1,\ldots,x_t)\}|+ \frac{1}{2^{2nt}}\mathbb{E}\sum_{(x_1,\ldots,x_t)\not\sim(x'_1,\ldots,x'_t)}e^{\mathrm{i}\sum_{j}\phi_j\sum_{l=1}^{t}\left(p_{y_j}(x_l)-p_{y_j}(x'_l)\right)}\\
&\leq \frac{t!}{2^{nt}}+\frac{1}{2^{2nt}}\sum_{(x_1,\ldots,x_t)\not\sim(x'_1,\ldots,x'_t)}\mathbb{E} e^{\mathrm{i}\sum_{j}\phi_j\sum_{l=1}^{t}\left(p_{y_j}(x_l)-p_{y_j}(x'_l)\right)}.
\end{split}
\end{align}
In the limit of infinite depth we expect the second summand to vanish as our random walk on phase states converges to the uniform measure on $(S^1)^{2^n}$.

Notice that the contribution of a pair $(x_1,\ldots,x_t),(x'_1,\ldots,x'_t)$ averages to $0$ if any of the expressions $\sum_{l=1}^{t}\left(p_{y_j}(x_l)-p_{y_j}(x'_l)\right)$ do not vanish.
In the following we will say that $p_y$ \textit{distinguishes} $(x_1,\ldots,x_t) $ and $(x'_1,\ldots,x'_t)$ if $\sum_{l=1}^{t}\left(p_{y}(x_l)-p_{y}(x'_l)\right)\neq 0$.
We will apply Fourier analysis of Boolean functions~\cite{o2014analysis} to lower bound the probability that $\sum_{l=1}^{t}\left(p_{y_j}(x_l)-p_{y_j}(x'_l)\right)\neq 0$.
For this define the functions 
\begin{equation}\label{eq:definefunctions}
f_{x_1,\ldots,x_t}^{x'_1,\ldots,x'_t}:=\sum_{l=1}^t\delta_{x_l}-\delta_{x'_l}.
\end{equation}
If $(x_1,\ldots,x_t)\not\sim(x'_1,\ldots,x'_t)$ is equivalent to $f_{x_1,\ldots,x_t}^{x'_1,\ldots,x'_t}\neq 0$.
For a general function $f:\{-1,1\}^n\to \mathbb{R}$, the Fourier transform is defined as~\cite{o2014analysis}:
\begin{equation}
\hat{f}(y)=\mathbb{E}_x p_y(x)f(x).
\end{equation} 
It follows directly that
\begin{equation}
\hat{f}_{x_1,\ldots,x_t}^{x'_1,\ldots,x'_t}(y)=2^{-n}\sum_{l=1}^{t}\left(p_{y}(x_l)-p_{y}(x'_l)\right).
\end{equation}

We will first analyse the expression in Eq.~\eqref{eq:momentcal2} for the case that $m$ is fixed and $y_1,\ldots,y_m$ are distributed uniformly and i.i.d. at random.
We call the resulting random walk on the diagonal unitaries the \textit{ideal auxiliary walk} as depicted in Figure~\ref{figure:idealwalk}.
\begin{figure}
	\includegraphics[scale=1.2]{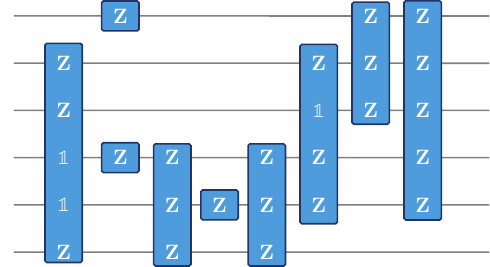}
	\caption{\label{figure:idealwalk} An instance of the ideal auxiliary walk. Each gate corresponds to a uniformly random rotation around the Pauli $Z$ string it is labeled with. 
		E.g. the first unitary in this figure is a roation $e^{\mathrm{i}\varphi \mathbbm{1}\otimes Z\otimes Z\otimes \mathbbm{1}\otimes \mathbbm{1}\otimes Z}$ with a uniformly random $\varphi\in (0,2\pi]$.}
\end{figure}
In the following we find two upper bounds that we can use to prove Eq.~\eqref{eq:momentboundintheorem}.
The probabilities $2^{-n}|\{y\in\{0,1\}^n,\hat{f}_{x_1,\ldots,x_t}^{x'_1,\ldots,x'_t}(y)= 0 \}|$ are the eigenvalues of the $t$-fold moment operator for the ideal auxiliary walk on the diagonals.
The first bound will be a bound on all non-trivial eigenvalues and therefore on its spectral gap. 
The second bound is stronger but only holds for a quantifiable number of the non-trivial eigenvalues.
Afterwards, we come back to the actual random walk $\zeta_{(n)}$ and prove that sufficiently nearly uniformly drawn $y_j$ are generated for the two bounds to be applicable.

\textbf{Bound on gap of the auxiliary walk.} 
For the first bound we apply Parseval's identity~\cite{o2014analysis}:
If a string appears in both $(x_1,\ldots,x_t)$ and $(x'_1,\ldots,x'_t)$ we can remove it without changing $f_{x_1,\ldots,x_t}^{x'_1,\ldots,x'_t}$.
We remove strings like this until we find two tuples of length $0<r\leq t$ with no common strings.
Then, we obtain from Parseval's identity~\cite{o2014analysis} that 
\begin{align}
\begin{split}
2^{-2n}(2r)^2 \left|\left\{y, \sum_{l=1}^{t}p_{y}(x_l)-p_{y}(x'_l)\neq 0\right\}\right|&\geq 2^{-2n}\sum_y \left(\sum_{l=1}^{r}p_{y}(x_l)-p_{y}(x'_l)\right)^2\\
&=\sum_y\left(\hat{f}_{x_1,\ldots,x_r}^{x'_1,\ldots,x'_r}(y)\right)^2\\
&= 2^{-n}\sum_x\left(f_{x_1,\ldots,x_r}^{x'_1,\ldots,x'_r}(x)\right)^2\\
&\geq 2r2^{-n}.
\end{split}
\end{align}
Thus, the probability of a random string $y$ distinguishing the two tuples is lower bounded by
\begin{equation}\label{eq:parsevalapplied}
2^{-n}\left|\left\{y, \sum_{l=1}^{t}p_{y}(x_l)-p_{y}(x'_l)\neq 0\right\}\right|\geq \frac{1}{2r}\geq \frac{1}{2t}.
\end{equation}
Therefore, drawing $m$ random bitstrings, the probability of not finding a single string with this property becomes $\leq (1-\frac{1}{2t})^m$.
By choosing $m\sim nt^2$, this sufficiently supresses the second summand in Eq.~\eqref{eq:momentcal1}.

\textbf{Counting argument for most eigenvalues.}
For the second bound, we use a counting argument to show that a large enough fraction of functions $f_{x_1,\ldots,x_t}^{x'_1,\ldots,x'_t}$ for $t=2^n/2$ has large support.
As $\{p_y\}_y$ is a basis~\cite{o2014analysis}, every function $f:\{-1,1\}^n\to \mathbb{R}$ has a unique decomposition in terms of monomials.
We find that the values of $2^n\hat{f}_{x_1,\ldots,x_t}^{x'_1,\ldots,x'_t}$ are integers between $-2^n$ and $2^n$.
If the Fourier support is smaller than $A$, this constitutes at most $2^{(n+1)A}$ distinct functions.
However, different pairs of tuples can yield the same corresponding function $f_{x_1,\ldots,x_t}^{x'_1,\ldots,x'_t}$.
We upper bound the maximal multiplicities that can arise in the second summand of Eq.~\eqref{eq:momentcal2} that way:
For any pair of tuples $(x_1,\ldots,x_t)$ and $(x'_1,\ldots,x'_t)$, we can remove $t-r$ bitstrings that appear in both tuples without changing the corresponding function.
Therefore, for each function $f_{x_1,\ldots,x_t}^{x'_1,\ldots,x'_t}$ there are at most $2^{n(t-r)}{t\choose t-r}t!/r!$ such ``invisible'' pairs of $t-r$ tuples.
Indeed, we choose $t-r$ positions in the first tuple leading to ${t\choose t-r}$ choices and for each position we have $2^{n}$ choices.
Moreover, for each of these constellations in the first tuple, we can enlist the $t-r$ bitstrings chosen and assign them to the second tuple choosing $t-r$ positions, where the order matters. This leads to at most $t!/r!$ further choices per constellation in the first tuple.
It is easy to see that the corresponding functions are now invariant precisely under permutations of the elements of the two remaining $r$-tuples, which yields $r!^2$ many choices.
In summary, the maximal multiplicity arising in Eq.~\eqref{eq:momentcal2} is upper bounded by $\max_r 2^{n(t-r)}{t\choose t-r}r!t!=\max_r 2^{n(t-r)}t!^2/(t-r)!$.
Therefore, the fraction of tuples with Fourier support smaller than $A$ can be upper bounded by  
\begin{equation}
t! 2^{(n+1)A}2^{-nt} \max_r(2^{-nr}t!/(t-r)!)\leq t! 2^{(n+1)A}2^{-nt}
\end{equation}
for $r\leq t\leq 2^n$, where we used that 
\begin{equation}
t!/(t-r)!\leq t^{r}=2^{r\log_2(t)}
\end{equation}
and $\log_2(t)\leq n$.
We will upper bound all contributions in Eq.~\eqref{eq:momentcal2} of tuples with Fourier support smaller than $A$ by $1$.
For all the remaining pairs of tuples the probability of not drawing a string that distinguishes them is upper bounded by $(1-A/2^n)$.

\textbf{Leaving the ideal setting.}
 In the following we show that the walk $\zeta_{(n)}$ generates sufficiently many nearly uniformly and nearly independently distributed phases $e^{\mathrm{i}\phi_j\sum_{l=1}^{t}\left(p_{y_j}(x_l)-p_{y_j}(x'_l)\right)}$ for the two bounds on the eigenvalues of the ideal walk to be applicable.
The random $\mathrm{CNOT}$ gates generate the subgroup of the Clifford group that normalizes all Pauli strings only containing $Z$ and identities as tensor factors.
In particular, the generated group is the finite group $\mathcal{Z}_n$ of reversible circuits with at most $2^{O(n^2)}$ elements.
Moreover, every element is generated by at most $9n$ $\mathrm{CNOT}$ gates~\cite{maslov2018shorter,bravyi2021hadamard}.
This suffices to show that a random walk $\sigma$ defined by applying random $\mathrm{CNOT}$ gates is gapped via the comparison technique by Diaconis and Shashahani~\cite{diaconis1993comparison}. 
\begin{lemma}\label{lemma:convergenceCNOTs}
	The total variation distance between the random walk $\sigma$ and the uniform measure on $\mathcal{Z}$ satisfies
	\begin{equation}
	d_{TV}(\sigma^{*k},\mu_{\mathcal{Z}})\leq 2^{n^2+n}\left(1-\frac{1}{500n^5}\right)^k.
	\end{equation}
\end{lemma}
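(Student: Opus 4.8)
## Proof plan for Lemma~\ref{lemma:convergenceCNOTs}

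The plan is to apply the comparison technique of Diaconis and Saloff-Coste~\cite{diaconis1993comparison} to bound the spectral gap of the random walk $\sigma$ on the finite group $\mathcal{Z}_n$. The walk $\sigma$ is driven by the uniform measure on the set $S$ of all $\mathrm{CNOT}$ gates between pairs of qubits (embedded in $SU(2^n)$), which generates $\mathcal{Z}_n$. The comparison method requires a reference walk with a known gap; here the natural choice is the trivial walk whose step distribution is the uniform measure $\mu_{\mathcal{Z}_n}$ itself, which has perfect mixing in one step (its only nonzero eigenvalue is $0$, i.e.\ gap $1$). Comparing against $\mu_{\mathcal{Z}_n}$ amounts to writing each group element $g\in\mathcal{Z}_n$ as a word in the generators $S$ and bounding the resulting ``flows.''

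The key steps, in order, are as follows. First I would record that every $g\in\mathcal{Z}_n$ can be written as a product of at most $N:=9n$ generators from $S$, using the synthesis bound of Refs.~\cite{maslov2018shorter,bravyi2021hadamard}; fix one such word $w(g)$ for each $g$. Second, I would invoke the standard comparison inequality: for the generating set $S$ with $|S|\le n^2$ (there are fewer than $n^2$ ordered pairs of qubits) and word length at most $N$, the spectral gap $\lambda$ of $\sigma$ satisfies $\lambda \ge \tfrac{1}{A}$ where the comparison constant is $A = \tfrac{|S|}{|\mathcal{Z}_n|}\sum_g |w(g)|\,N \le |S|\,N^2 \le n^2 (9n)^2 = 81 n^4$; absorbing the factor coming from the number of generators actually used at each site and a crude constant gives $\lambda \ge \tfrac{1}{500 n^5}$ (the extra power of $n$ and the constant $500$ providing ample slack for the precise bookkeeping of the Diaconis--Saloff-Coste bound, e.g.\ the factor counting how many words pass through a given edge). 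Third, I would convert the spectral gap into a total variation bound: since $\sigma$ is symmetric (because $\mathrm{CNOT}=\mathrm{CNOT}^{-1}$, so $S=S^{-1}$) and $\mathcal{Z}_n$ has at most $2^{O(n^2)}$ elements, the standard $L^2$ bound gives
\begin{equation}
d_{TV}(\sigma^{*k},\mu_{\mathcal{Z}_n})\le \tfrac12\sqrt{|\mathcal{Z}_n|}\,(1-\lambda)^k \le 2^{n^2+n}\left(1-\frac{1}{500n^5}\right)^k,
\end{equation}
where I used $|\mathcal{Z}_n|\le 2^{2n^2+2n}$ (a bound on the number of invertible $\mathbb{F}_2$-linear maps, or more simply the number of $\mathrm{CNOT}$ circuits) to dominate the prefactor.

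The main obstacle is the precise accounting in the comparison inequality: one must verify that the chosen words $w(g)$ do not overload any single generator-edge, i.e.\ control $\max_{s\in S}\sum_{g:\,s\in w(g)} |w(g)|$, and track the difference between counting ordered vs.\ unordered qubit pairs and the fact that $S$ is not symmetric as a multiset unless one is careful. I expect this to be routine given the explicit $9n$-gate synthesis, but it is where the concrete constants ($500$, and the power $n^5$ rather than $n^4$) are pinned down; the generous slack in the statement is precisely there to make this step painless. A secondary point to check is that $\sigma$ restricted to $\mathcal{Z}_n$ is irreducible and aperiodic, which follows immediately because $S$ generates $\mathcal{Z}_n$ and $\sigma$ is a lazy-free symmetric walk on a group with $\mathrm{id}$ reachable in an even number of steps—aperiodicity can be ensured, if needed, by the standard trick of noting the walk is already a mixture over many generators so its support after $O(n)$ steps is all of $\mathcal{Z}_n$.
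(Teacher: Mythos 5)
Your overall strategy coincides with the paper's: a Diaconis--Saloff-Coste comparison bound on the gap of the CNOT walk, followed by the standard $L^2$ conversion with the prefactor $\sqrt{|\mathcal{Z}_n|}\leq 2^{n^2+n}$. However, the quantitative input you feed into the comparison step is wrong, and it is exactly the load-bearing step. You claim every element of $\mathcal{Z}_n$ is a word of at most $N=9n$ CNOT generators. This is impossible: $\mathcal{Z}_n$ is the group of invertible $\mathbb{F}_2$-linear maps, of order roughly $2^{n^2}$, while the number of words of length $9n$ over a generating set of size at most $n^2$ is at most $(n^2+1)^{9n}=2^{O(n\log n)}$. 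Hence the diameter is $\Omega(n^2/\log n)$ for all-to-all CNOTs and $\Theta(n^2)$ for the nearest-neighbour generators actually used by the walk $\sigma$ in this paper. The paper's proof uses the synthesis bound in the form $d=9n^2$ (the ``$9n$'' in the main text is a slip), together with the fact that each generator -- an element of the six-element group $\mathcal{Z}_2$ on one of the $n$ neighbouring pairs -- is drawn with probability $\eta\geq 1/6n$, so that $\eta/d^2\geq 1/486n^5\geq 1/500n^5$. With the correct diameter, the ``ample slack'' you rely on disappears: your own crude bound $A\leq|S|N^2$ gives $6n\cdot(9n^2)^2\approx 500n^5$ for the nearest-neighbour set, i.e.\ no slack at all, and with your all-to-all choice $|S|=n^2$ it gives only a gap of order $1/n^6$, which does not yield the stated $\left(1-\frac{1}{500n^5}\right)^k$.

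Two further points to fix. First, your walk is not the paper's $\sigma$: the step distribution here draws a uniform element of $\mathcal{Z}_2$ (including the identity) on a uniformly random nearest-neighbour pair, not a uniform CNOT on an arbitrary pair; this matters both for $\eta$ and for the diameter. Second, your pure-CNOT walk has no holding probability, so converting a bound on the second-largest eigenvalue into a total variation bound requires separately controlling the most negative eigenvalue; your remark about aperiodicity does not supply such a bound. In the paper this issue does not arise because the identity lies in the support of $\sigma$ with constant probability (it is one of the six elements of $\mathcal{Z}_2$), and the cited comparison estimate is applied directly to $\|T_{\sigma}-T_{\mu_{\mathcal{Z}}}\|_{\infty}$, which controls both spectral edges. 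Your final $L^2$-to-TV step and the bound $|\mathcal{Z}_n|\leq 2^{2n^2+2n}$ are fine and match the paper.
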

We provide a proof of this lemma in Section~\ref{appendix:lemma1}.
In particular, we find
\begin{equation}\label{eq:TVdistance}
d_{TV}(\sigma^{*2000n^7},\mu_{\mathcal{Z}})\leq 2^{-n}.
\end{equation}
In the following, we therefore choose $$k=2000 n^7$$ and obtain that all probabilities over $\sigma^{*k}$ and $\mu_{\mathcal{Z}}$ differ at most by $2^{-n}$.
Moreover, notice that $\mu_{\mathcal{Z}}$ mix the Pauli $Z$-strings uniformly: $U Z^{\otimes y} U^{\dagger}$ with $y\neq 0\cdots 0$ is a uniformly random element of $\{Z^{\otimes x}\}_{x\in \{0,1\}^n,x\neq 0\cdots 0}$.
This follows from the fact that $\mathcal{Z}$ acts transitively on this set.
Moreover, the total variation distance between the uniform distribution on $\{Z^{\otimes x}\}_{x\in \{0,1\}^n,x\neq 0\cdots 0}$ and the uniform distribution on $\{Z^{\otimes x}\}_{x\in \{0,1\}^n}$ is upper bounded by $2\times 2^{-n}$.
Overall, the probability of drawing a distinguishing $Z$-string from the uniform measure vs the ensemble $\{UZ\otimes \mathbbm{1}_{n-1}U^{\dagger}\}_{U\sim \sigma^{*k}}$ differs by at most $3\times 2^{-n}$.

We now separate the $d$ random unitaries drawn from $\zeta_{(n)}$ into $d/4k$ many blocks. 
By Hoeffding's inequality we find that each block of $4k$ gates contains at least one $Z$ rotation and $k$ random generators of the group $\mathcal{Z}$ generated by $\mathrm{CNOT}$ with probability at least $1-2^{-n-1}$.
Therefore, the probability for two adjacent blocks to contain a $Z$ rotation in the left block and at least $k$ $\mathrm{CNOT}$s /identities in the right block is at least $(1-2^{-n-1})^2\geq 1-2^{-n}$. 
We can again apply Hoeffding's inequality to show that at least half of the pairs of blocks have this property with probability at least $1-e^{-2 \frac{d}{8k}(\frac12-2^{-n})^2}\leq 1-e^{- \frac{d}{20k}}$.
 Therefore, we find with high probability that there are at least $d/8k$ many rotations $e^{i\phi_j Z^{y_j}}$ with nearly uniform and nearly independent $y_j$ in Eq.~\eqref{eq:momentcal1}. 
 More precisely, up to errors of $3\times 2^{-n}$ in total variation distance by Eq.~\eqref{eq:TVdistance}. 
We find for all tuples $(x_1,\ldots,x_t)$, $(x'_1,\ldots,x'_t)$ with Fourier support larger than $A$ that 
 \begin{align}
 \begin{split}\label{eq:boundonphases}
 \mathbb{E} e^{\mathrm{i}\sum_{j}\phi_j\sum_{l=1}^{t}\left(p_{y_j}(x_l)-p_{y_j}(x'_l)\right)}&\leq e^{- \frac{d}{20k}}+\left(1-\frac{A}{2^n}+3\times 2^{-n}\right)^{d/8k}.
 \end{split}
 \end{align}
Similarly, using Eq.~\eqref{eq:parsevalapplied}, we find  for all tuples $(x_1,\ldots, x_t)\not \sim (x_1',\ldots,x'_t)$ that 
 \begin{equation}\label{eq:boundonphaseswitht}
 \mathbb{E}e^{\mathrm{i}\sum_{j}\phi_j\sum_{l=1}^{t}\left(p_{y_j}(x_l)-p_{y_j}(x'_l)\right)}\leq e^{- \frac{d}{20k}}+\left(1-\frac{1}{2t}+3\times 2^{-n}\right)^{d/8k}.
 \end{equation}

 \textbf{Putting it together.} 
 With this in mind, we obtain from Cauchy-Schwarz that 
 \begin{align}
 \begin{split}
 \mathbb{E}_{U\sim \zeta_{(n)}^{*d}} |\langle\psi|U|+^n\rangle|^{2t}&=\langle \psi|^{\otimes t,t} M(\zeta_{(n)}^{*d})|+^n\rangle^{\otimes t,t}\\
 &\leq \sqrt{\langle +^n|^{\otimes t,t} M(\zeta_{(n)}^{*d})^{\dagger}M(\nu_n^{*d},t)|+^n\rangle^{\otimes t,t}}\\
 &=\sqrt{\mathbb{E}_{U\sim \nu^{*2d}_n} |\langle +^n|U|+^n\rangle|^{2t}}.
 \end{split}
 \end{align}
 Eq.~\eqref{eq:finalmomentboundintheorem} follows from this combined with Eq.~\eqref{eq:boundonphases} and~\eqref{eq:momentcal2} for $t=2^n/2$.
  Eq.~\eqref{eq:momentboundintheorem} follows from the above calculation combined with Eq.~\eqref{eq:boundonphaseswitht} and~\eqref{eq:momentcal2}.
 \end{proof}

\section{Growth of complexity}\label{section:growthofcomplexity}
 The lower bounds in Theorem~\ref{thm:square-root-growth} now follow from a counting argument involving the union bound and Markov's inequality.
 \begin{proof}[Proof of Theorem~\ref{thm:square-root-growth}] 
Notice that the trace norm difference of pure states satisfies
\begin{equation}
\frac12 || |\phi\rangle\langle\phi|-|\psi\rangle\langle\psi| ||_1=\sqrt{1-|\langle\phi|\psi\rangle|^2}.
\end{equation}
Therefore, 
\begin{equation}
\frac12 || |\phi\rangle\langle\phi|-|\psi\rangle\langle\psi| ||_1\leq \delta \iff |\langle\phi|\psi\rangle|^2\geq 1-\delta^2.
\end{equation}
 Denote by $M_{\mathcal{K},R}$ all the unitaries that can be generated from applications of $R$ many gates from the gate set $\mathcal{K}$.
 Then, we obtain from Markov's inequality
 \begin{align}
 \begin{split}\label{eq:highermarkovbound}
 \mathrm{Pr}_{\phi}\left[\exists |\psi\rangle\in M_{\mathcal{K},R}, |\langle\psi|\phi\rangle|^2\geq 1-\delta^2\right]&\leq \sum_{\psi\in M_{\mathcal{K},R}}\mathrm{Pr}_{\phi}[|\langle\psi|\phi\rangle|^2\geq 1-\delta^2]\\
 &=\sum_{\psi\in M_{\mathcal{K},R}}\mathrm{Pr}_{\phi}[|\langle\psi|\phi\rangle|^{2t}\geq (1-\delta^2)^{t}]\\
 &\leq \sum_{\psi\in M_{\mathcal{K},R}} \frac{\mathbb{E}|\langle\psi|U| +^n\rangle|^{2t}}{(1-\delta^2)^t}\\
 &\leq |\mathcal{K}|^R\frac{\sqrt{\mathbb{E}_{U\sim \zeta_{(n)}^{*2d}} |\langle+^n|U|+^n\rangle|^{2t}}}{(1-\delta^2)^{t}}.
 \end{split}
 \end{align}
 
 This is bounded by Theorem~\ref{prop:momentbound}.
 Eq.~\eqref{eq:complexitybound} in Theorem~\ref{thm:square-root-growth} follows from applying Eq.~\eqref{eq:momentboundintheorem}: If $d= 20k nt^2$ then
 \begin{align}
 \begin{split}
 \eqref{eq:highermarkovbound}&\leq |\mathcal{K}|^R \frac{\sqrt{t!2^{-nt}+2^{-nt}}+2^{-nt}}{(1-\delta^2)^{t}}\\
 &\leq 2^{-t(n-\log_2(t)+\log_2(1/1-\delta))+R\log_2(|\mathcal{K}|)}.
 \end{split}
 \end{align}
 In particular, we have with probability $1-2^{-\frac12 t(n-\log_2(t))}$ that 
 \begin{equation}
 R\geq \frac{t}{2\log_2(|\mathcal{K}|)}\left( n-\log_2(t)-\log_2\left(\frac{1}{1-\delta^2}\right)\right).
 \end{equation}
 This is almost Eq.~\eqref{eq:complexitybound} after inserting $t=\sqrt{d/20kn}$ except that we have bounded the complexity $C_{\mathcal{K},\delta}$ instead of $C_{\delta}$.
 For Eq.~\eqref{eq:complexitybound}, we will choose $\mathcal{K}_{\delta}$ to be a sufficiently close $\varepsilon$-net.
 Notice that an $\varepsilon$-net on $SU(4)$ exists of size $B^{\log(1/\varepsilon)}$ for a constant $B>0$.
 This follows from the existence of gapped gate sets~\cite{bourgain2012spectral} and the relation between approximation and gaps of averaging operators~\cite{harrow2002efficient,oszmaniec2021epsilon,varju2013random}.
 Moreover, a standard argument ensures that errors in the gates (in operators norm) add up linearly for quantum circuits~\cite{bennett1997strengths}.
 Suppose a state $|\psi\rangle$ can be approximated up to error $\delta$ in trace norm for $\delta\in (0,1/\sqrt{2})$ with circuits containing $R$ gates, then $|\langle\psi|V_R\cdots V_1|0^n\rangle|=\sqrt{1-\delta^2}$. 
 We know that there are gates $\tilde{V}_R,\ldots,\tilde{V}_1\in \mathcal{K}_{\delta}$ such that 
 \begin{equation}
 |\langle \psi|\tilde{V}_R\cdots \tilde{V}_1|0^n\rangle|^2\geq \sqrt{1-\delta^2}-\varepsilon R\geq \sqrt{1-2\delta^2},
 \end{equation}
 where the second inequality follows from choosing $\varepsilon=\delta^2/2d$.
 Consequently, $|\psi\rangle$ can be approximated up to an error of $\sqrt{2}\delta$ with gates from a $\delta^2/d$-net $\mathcal{K}_{\delta}$.
 Thus, $C_{\mathcal{K}_{\delta},\sqrt{2}\delta}\leq C_{\delta}$.
 We can therefore perform the above calculation~\eqref{eq:highermarkovbound} with $|\mathcal{K}_{\delta}|\leq B^{\log(d/\delta^2)}$.
 
For the proof of Eq.~\eqref{eq:linearboundintheorem}, we choose $t=2^n/2$.
For $d=20kn2^n$ we then find that Eq.~\eqref{eq:finalmomentboundintheorem} yields
\begin{align}
\begin{split}
\eqref{eq:highermarkovbound}&\leq |\mathcal{K}_{\delta}|^R \frac{\sqrt{2^{-n2^n/2+2^n/2}2^{n-1}!}+2^{-n2^{^n}/2}}{(1-\delta^2)^{2^n/2}}\\
&\leq 2^{-\frac14(1-2\log_2(1/(1-\delta^2)))2^n +R\log_2(|\mathcal{K}_{\delta}|)}.
\end{split}
\end{align}
 In particular, for constant $\delta<\sqrt{1-2^{-1/2}}$, we can lower bound the complexity 
 \begin{equation}\label{eq:complexityofexpdeepRQC}
 R\geq \frac{2^n}{8\log_2(|\mathcal{K}_{\delta}|)}\left(1-2\log_2\left(\frac{1}{1-\delta^2}\right)\right),
 \end{equation}
 with probability at least $1-2^{-\frac18(1-\log_2[1/(1-\delta^2)]) 2^n}$.

For any two unitaries $U$ and $U'$ we have that
\begin{multline}
||U-U'||^2_{\infty}=\max_{\psi}\langle\psi|(U-U')(U-U')^{\dagger}|\psi\rangle\\
=\max_{\psi}2(1-\mathrm{Re}\langle \psi|U^{\dagger}U'|\psi\rangle)\geq\max_{\psi} 2(1-|\langle \psi|U^{\dagger}U'|\psi\rangle|).
\end{multline}
In particular, if for a unitary $U$ all circuits $V$ of depth $R$ satisfy $|\langle +^n|V^{\dagger}U|+^n\rangle|\geq \sqrt{1-\delta^2}$ then $||V-U||^2_{\infty}\geq 2(1-\sqrt{1-\delta^2})=:\delta'$.
The rest of the proof now follows from the fact that a less-than-linear growth rate for most random quantum circuits leads to contradiction:
Consider a random quantum circuit of depth $d$.
We split the random quantum circuit of depth $d'=20kn2^n$ into $\lceil 20kn2^n/d \rceil$ many blocks, each with exactly $d$ gates except for the last, which has possibly less gates.
Each of these blocks have, with probability $1-e^{-\Omega( d/kn)}$, a circuit complexity of at least
\begin{align}\label{eq:boundoncomplexity}
\begin{split}
C_{\mathcal{K}_{\delta'},\frac{\delta' d}{\lfloor 20kn2^n \rfloor}}&\geq \lfloor 20kn2^n/d\rfloor^{-1}\left(\frac{2^n}{8\log_2(|\mathcal{K}_{\delta}|)}\left( 1-2\log_2\left(\frac{1}{1-\delta^2}\right)\right)-d\right)\\
&\geq \frac{d}{160kn\log_2(|\mathcal{K}_{\delta}|)}\left( 1-2\log_2\left(\frac{1}{1-\delta^2}\right)\right)-\frac{d^2}{20kn2^n}\\
&\geq \frac{d}{160kn\log_2(|\mathcal{K}_{\delta}|)}\left( 1-2\log_2\left(\frac{1}{1-\delta^2}\right)\right)-1,
\end{split}
\end{align}
where the last step is true under the assumption $d\leq \sqrt{20kn} 2^{n/2}$.
Indeed, suppose that a fraction larger than $e^{-O(d/kn)}$ does not satisfy this bound.
Then, these instances would add up to a circuit complexity less than $\frac{2^n}{8\log_2{|\mathcal{K}_{\delta}|}}\left(\frac14-2\log_2(1/(1-\delta^2))\right)$ with probability that cannot be upper bounded by $e^{-O(2^n)}$ contradicting Eq.~\eqref{eq:complexityofexpdeepRQC}.
\end{proof}

\section{Proof of Lemma~\ref{lemma:convergenceCNOTs}}\label{appendix:lemma1}
In this Section we prove Lemma~\ref{lemma:convergenceCNOTs}:
\begin{proof}[Proof of Lemma~\ref{lemma:convergenceCNOTs}]
For any probability measure $\nu$ on $\mathcal{Z}$ we define the averaging operator $T_{\nu}:L^2(\mathcal{Z}_n)\to L^2(\mathcal{Z}_n)$ via
\begin{equation}
(T_{\nu}f)(U):=\mathbb{E}_{V\sim \nu} f(V^{-1}U).
\end{equation}
The maximal eigenvalue of this operator is always $1$, realized on the subspace of constant functions.
By Ref.~\cite{diaconis1993comparison}, we have 
\begin{equation}
||T_\sigma-T_{\mu_{\mathcal{Z}}}||_{\infty}\leq 1-\frac{\eta}{d^2}\leq 1-\frac{1}{500n^5},
\end{equation}
where $\eta $ is the probability of drawing a specific generator given as an element of $\mathcal{Z}_2$ applied to a pair of qubits $(i,i+1)$ (therefore $\eta \geq 1/6n$) and $d$ is the number of generators necessary to generate any element of $\mathcal{Z}$ (therefore $d=9n^2$~\cite{bravyi2021hadamard}).
Thus, for any bounded function $f:\mathcal{Z}\to [-1,1]$ we have the following estimate:
\begin{align}
\begin{split}
|\mathbb{E}_{U\sim\sigma^{*k}}f(U)-\mathbb{E}_{U\sim\mu_{\mathcal{Z}}}f(U)|&\leq |T_{\sigma^{*k}}f(\mathbbm{1})-T_{\mu_{\mathcal{Z}}}f(\mathbbm{1})|\\
&\leq ||T_{\sigma^{*k}}f-T_{\mu_{\mathcal{Z}}}f||_2\\
&\leq \left(1-\frac{1}{500n^5}\right)^k||f||_2\\
&\leq \sqrt{|\mathcal{Z}|}\left(1-\frac{1}{500n^5}\right)^k.
\end{split}
\end{align}
Notice that the group of reversible circuits $\mathcal{Z}$ is contained in the Clifford group and thus~\cite{ozols2008clifford}
\begin{equation}
|\mathcal{Z}|\leq 2^{n^2+2n}\prod_j(4^j-1)\leq 2^{2n^2+2n},
\end{equation}
which completes the proof.
\end{proof}

\section{Outlook}
There are multiple open problems and avenues to continue this line of work:

\begin{itemize}
\item It would be interesting to relate the moment bound of $\zeta$-random quantum circuits to the more natural ensemble of quantum circuits with local gates drawn from the Haar measure $\mu_H$ on $SU(4)$.
Indeed, we have the operator inequality
\begin{equation}
    M((\mu_H)_{(n)},t)\leq M(\zeta_{(n)},t)
\end{equation}
 as $M(\zeta_{(n)},t)$ is a convex combination of orthogonal projectors and as such positive semidefinite. 
Moreover, its eigenvalue $1$ subspace contains $\mathrm{im}M((\mu_H)_{(n)},t)$.
Unfortunately, $(\bullet)^d$ is not an operator monotone and it is not clear if the inequality 
\begin{equation}
    M\left((\mu_H)_{(n)}^{*d},t\right)\leq M\left(\zeta_{(n)}^{*d},t\right)
\end{equation}
holds.
This leads to the fascinating conceptual question whether random quantum circuits with gates drawn from a subgroup can generate more random ensembles of states.
\item We have proven linear growth of quantum circuit complexity with an exponentially small error robustness.
The same method might be strong enough to imply linear growth for constant error robustness.
In fact, a simple conjecture about the Fourier support of Boolean functions would directly imply such a result in combination with the methods outlined in this work:
\begin{conjecture}\label{conjecture:boolean}
 Draw $2t$ bitstrings $x_1,\ldots,x_t, x'_1,\ldots,x'_t$ uniformly at random.
 Then, the support of $\hat{f}_{x_1,\ldots,x_t}^{x'_1,\ldots,x'_t}$, with $f_{x_1,\ldots,x_t}^{x'_1,\ldots,x'_t}$ as defined in~\eqref{eq:definefunctions}, contains at least $2^{n}/\mathrm{poly}(n)$ elements with probability $1-e^{-\Omega(t)}$ for all $t\leq e^{\Omega(n)}$.
\end{conjecture}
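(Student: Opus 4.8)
The plan is to bypass the Fourier support directly and instead control the fourth moment of the Fourier coefficients of $f:=f_{x_1,\ldots,x_t}^{x'_1,\ldots,x'_t}$. Identifying the domain with the group $\mathbb F_2^n$ so that the characters multiply, $p_y(u)p_y(v)=p_y(u\oplus v)$ and $\sum_y p_y(w)=2^n\mathbbm{1}[w=0]$, the Cauchy--Schwarz inequality $\big(\sum_y\hat f(y)^2\big)^2\le|\mathrm{supp}(\hat f)|\sum_y\hat f(y)^4$ combined with Parseval $\sum_y\hat f(y)^2=2^{-n}\|f\|_2^2$ and the identity $\sum_y\hat f(y)^4=2^{-3n}E_4(f)$ gives
\begin{equation}
|\mathrm{supp}(\hat f)|\ \ge\ \frac{2^n\,\|f\|_2^4}{E_4(f)},\qquad E_4(f):=\sum_{w_1\oplus w_2\oplus w_3\oplus w_4=0}f(w_1)f(w_2)f(w_3)f(w_4)=\sum_v g(v)^2\ \ge\ 0,
\end{equation}
where $g(v):=\sum_w f(w)f(w\oplus v)$ is the autocorrelation, so that $g(0)=\|f\|_2^2$ and $|\mathrm{supp}(\hat f)|\ge 2^n/(1+\sum_{v\ne 0}g(v)^2/\|f\|_2^4)$. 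It therefore suffices to prove that, with probability $1-e^{-\Omega(t)}$, one has both $\|f\|_2^2\ge t$ and $E_4(f)\le\mathrm{poly}(n)\,t^2$; then $|\mathrm{supp}(\hat f)|\ge 2^n/\mathrm{poly}(n)$.

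The expectation is on our side. For a fixed $y\ne 0$ the $2t$ values $p_y(x_l),p_y(x'_l)$ are i.i.d.\ unbiased signs, so $2^n\hat f(y)=\sum_l c_l$ with $c_l=p_y(x_l)-p_y(x'_l)\in\{-2,0,2\}$ i.i.d., $\mathbb E c_l^2=2$, $\mathbb E c_l^4=8$; hence $\mathbb E[\hat f(y)^4]=2^{-4n}(12t^2-4t)$ and $\mathbb E[E_4(f)]=(1-2^{-n})(12t^2-4t)<12t^2$ for \emph{every} $t$. Moreover $\|f\|_2^2\ge|\mathrm{supp}(f)|=\Theta(t)$ with probability $1-e^{-\Omega(t)}$ for $t$ up to $c\,2^n$, because the number of colliding pairs among $x_1,\ldots,x'_t$ is a sum of $\binom{2t}{2}$ indicators of mean $2^{-n}$ and concentrates around its small mean by a Chernoff bound. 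So everything reduces to the upper tail $\Pr[E_4(f)>\mathrm{poly}(n)\,t^2]\le e^{-\Omega(t)}$, for which Markov's inequality only yields the useless bound $\Pr[E_4(f)>\lambda t^2]\le 12/\lambda$: this tail estimate is the heart of the matter.

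Three sub-regimes are handled easily. If $t\le\mathrm{poly}(n)$ then $E_4(f)\le(2t)^4=\mathrm{poly}(n)\,t^2$ deterministically (or one may simply invoke the classical uncertainty principle $|\mathrm{supp}(\hat f)|\ge 2^n/|\mathrm{supp}(f)|\ge 2^n/(2t)$, valid whenever $f\ne 0$, and $\Pr[f=0]\le t!/2^{nt}\le e^{-\Omega(t)}$). If $t\lesssim 2^{n/2}\,\mathrm{polylog}(n)$ one conditions on a ``spread-out'' event $\mathcal E$ -- that no value is attained more than $O(\log t)$ times and $\max_{v\ne 0}|g(v)|\le O(\log t)$, each a sum of indicators of small mean, so $\Pr[\mathcal E^c]\le e^{-\Omega(t)}$ by Chernoff and a union bound -- on which resampling one bitstring perturbs $g$ at $O(t)$ frequencies by $O(\log t)$ each and hence changes $E_4(f)=\|g\|_2^2$ by at most $O(t\,\mathrm{polylog}(t))$; the bounded-differences inequality applied on $\mathcal E$ to the $2t$ i.i.d.\ inputs then gives $\Pr[E_4(f)>13t^2\mid\mathcal E]\le e^{-\Omega(t/\mathrm{polylog}(t))}$. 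Finally, if $t\ge 2^n/\mathrm{poly}(n)$, the counting argument already used in the proof of Theorem~\ref{prop:momentbound} -- the bound $t!\,2^{(n+1)A}2^{-nt}$ on the fraction of tuples whose Fourier support is below $A$ -- is $e^{-\Omega(t)}$-small at $A=2^n/\mathrm{poly}(n)$, which is exactly the claim.

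The main obstacle is the intermediate range $2^{n/2}\,\mathrm{polylog}(n)\lesssim t\lesssim 2^n/\mathrm{poly}(n)$. There the perturbation of $f$ caused by resampling one bitstring is essentially global, so the bounded-differences constant blows up and McDiarmid becomes vacuous, while the counting bound is still far too weak. What is needed is a genuine concentration statement for the additive energy $E_4(f)=2^{-n}\sum_{y\ne 0}S_y^4$, $S_y=\sum_l\big(p_y(x_l)-p_y(x'_l)\big)$, of a random subset of $\mathbb F_2^n$. The two routes I would try are: (i) a moment bound $\mathbb E\big[E_4(f)^r\big]\le(C\,t^2)^r\,r^{O(r)}$ for $r$ up to $\Theta(t)$, obtained by organising the $4r$-fold index sum by the partition structure of the indices and the rank of the linear relations they impose (most configurations contribute at the ``matching'' scale $t^{2r}$, and the genuinely additive ones are suppressed by powers of $2^{-n}$ once one exploits the sign cancellations coming from $\sum_l(p_y(x_l)-p_y(x'_l))$ being centred); and (ii) a quantitative multivariate central limit theorem for the isotropic sum of i.i.d.\ vectors $(S_y)_{y\ne 0}=\sum_l\big(p_y(x_l)-p_y(x'_l)\big)_{y\ne 0}$, whose fourth-power sum should concentrate around $12t^2\cdot 2^n$ with Gaussian-type fluctuations, yielding an $e^{-\Omega(2^n)}$ upper tail when $t$ is comparable to $2^n$ and an $e^{-\Omega(t)}$ tail in the full range. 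Making either of these work -- controlling the high moments of the additive energy, or pushing a CLT through in the regime $t\ll 2^n$ where the vector lives in a low-dimensional subspace -- is where I expect the real difficulty to lie; the estimates above already establish the conjecture in the low- and high-$t$ regimes, with only the polylogarithmic loss in the exponent that a more careful, variance-aware concentration argument should remove.
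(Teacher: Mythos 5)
What you are attempting to prove is stated in the paper as an open \emph{conjecture}: the paper offers no proof of Conjecture~\ref{conjecture:boolean}, and explicitly presents it as the missing ingredient for linear complexity growth at constant $\delta$. So there is no ``paper proof'' to compare against; the only relevant tool the paper provides is the counting bound $t!\,2^{(n+1)A}2^{-nt}$ from the proof of Theorem~\ref{prop:momentbound}, which you correctly observe already settles the conjecture in the regime $t\gtrsim 2^n/\mathrm{poly}(n)$.

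Your algebraic reduction is correct and is a genuinely sharper strategy than raw counting: Cauchy--Schwarz plus Parseval gives $|\mathrm{supp}(\hat f)|\ge 2^n\|f\|_2^4/E_4(f)$ with $E_4(f)=\sum_y(2^n\hat f(y))^4/2^n=\|g\|_2^2$, and the computation $\mathbb E[E_4(f)]=(1-2^{-n})(12t^2-4t)$ is right, as is $\mathbb E\|f\|_2^2=2t(1-2^{-n})$. The low-$t$ regime via the Donoho--Stark uncertainty principle together with $\Pr[f\equiv 0]\le t!/2^{nt}$ is clean. For the regime $t\lesssim 2^{n/2}\,\mathrm{polylog}(n)$ the outline is reasonable, but as written it has a technical gap: McDiarmid does not apply after conditioning on a non-product event $\mathcal E$, so ``bounded-differences on $\mathcal E$'' needs to be replaced by a typical-bounded-differences inequality (Kutin-style or Warnke's) or a truncation argument. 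That is repairable and you flag the polylog loss, so I take this regime as plausibly handled.

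The real issue is the one you name yourself: nothing in your argument, nor in the paper, handles $2^{n/2}\,\mathrm{polylog}(n)\ll t\ll 2^n/\mathrm{poly}(n)$. In this range a single resampled bitstring perturbs $g$ globally (the bounded-differences constant becomes comparable to $\mathbb E[E_4(f)]$ itself), and the counting bound $t!\,2^{(n+1)A}2^{-nt}$ is still exponentially far from $e^{-\Omega(t)}$ at $A=2^n/\mathrm{poly}(n)$ because $(n+1)A\gg t$. What is needed is an upper tail bound for the additive energy of a random signed multiset of density $t/2^n$, for instance $\Pr[E_4(f)>C\,\mathrm{poly}(n)\,t^2]\le e^{-\Omega(t)}$, and neither of your two proposed routes (high-moment bookkeeping of $\mathbb E[E_4(f)^r]$ for $r$ up to $\Theta(t)$, or a quantitative CLT for $(S_y)_{y\ne0}$) is carried out; both look genuinely hard in this intermediate density range. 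So this is not a proof of the conjecture but a correct and useful reduction of it, together with a resolution of the extreme regimes and a precise identification of where the difficulty sits.
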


\item Another open problem is the spectral gap of random quantum circuits with gates drawn Haar randomly from $SU(4)$. 
Remarkably, the ideal auxiliary random walk on the diagonal subgroup has a spectral gap of exactly $2^{-n}$.
This eigenvalue appears for $t=2^{n}/2$, where the eigenstate corresponds to choosing $x_1,\ldots,x_t,x'_1,\ldots,x'_t$ such that $f_{x_1,\ldots,x_t}^{x'_1,\ldots,x'_t}$ is a parity check.
The same behavior might be true for random quantum circuits: An exponentially small unconditional gap but at the same time ``most'' eigenvalues are upper bounded by $1-1/\mathrm{poly}(n)$.

\item Bounds on eigenvalues of the moment operators for random quantum circuits tend to pick up large prefactors limiting the applicability in practical regimes. 
The original bound on the depth in Ref.~\cite{brandao_local_2016} has a prefactor that can be taken to be $4\times 10^7$ and the asymptotically improved version in Ref.~\cite{haferkamp2022random} even requires a depth of  $10^{13}$.
It would be desirable to reduce these constants.
Similarly, the factors polynomial in $n$ are likely artefacts of our proof technique.

\item Last, it would be interesting to prove monotonicity of robust circuit complexity under the application of random quantum gates.
Such a result would imply the bounds obtained in this work.
Moreover, monotonicity under free operations is key for the formulation of a resource theory of robust circuit (un)complexity~\cite{halpern2022resource}.

\end{itemize}
\section{Acknowledgments}
We want to thank Chi-Fang Chen, Yifan Jia, Richard Kueng and Ryan O'Donnell for fruitful discussions.
Moreover, we thank Nicole Yunger Halpern, Jens Eisert and Ryotaro Suzuki for detailed comments on this manuscript.
The author acknowledges funding from the Harvard Quantum Initiative.

\bibliographystyle{alpha}
\newcommand{\etalchar}[1]{$^{#1}$}

\end{document}